\theoremstyle{dgthn}
\newtheorem{theorem}{Theorem}[section]
\newtheorem{definition}[theorem]{Definition}
\theoremstyle{dgdef}
\newenvironment{remark}%
  {\par\medbreak\refstepcounter{theorem}%
    \noindent\textbf{Remark~\thetheorem. }}%
  {\qed\par\medskip}
  \newcommand{\eqnum}{\leavevmode\hfill\refstepcounter{equation}\textup{\tagform@{\theequation}}} 
\numberwithin{equation}{section}
\let\theoldvmacro\v
  \def\v#1{\ifmmode{\bm #1}\else{\theoldvmacro#1}\fi}
\let\oldint\int
\def\int{\oldint\nolimits}
\def\cA{\mathcal A}
\def\S{\mathcal S}
\def\Z{\mathcal Z}
\def\cL{\mathscr L}
\def\R{\mathbb R}
\DeclareMathOperator\Prob{Prob}
\def\Expectation{{\mathbb E}}
\begin{document}

	\articletype{Research Article}
  \journalname{J.~Non-Equilib.~Thermodyn.}
  \journalyear{2015}
  \journalvolume{???}
  \journalissue{???}
  \startpage{1}
  \aop
  \DOI{10.1515/jnet-YYYY-XXXX}


\title{A generalization of Onsager's reciprocity relations to gradient flows with nonlinear mobility}
\runningtitle{Generalization of Onsager's reciprocity relations}

\author[1,2]{A. Mielke}
\author[3,*]{M. A. Peletier}
\author[1]{D.R.M. Renger}
\affil[1]{Weierstra\ss-Institut, Mohrenstra\ss e 39, 10117 Berlin, Germany}
\affil[2]{Institut f{\"u}r Mathematik, Humboldt-Universit{\"a}t zu Berlin, Rudower Chaussee 25, 12489 Berlin (Adlershof), Germany}
\affil[3]{Department of Mathematics and Computer Science, and Institute for Complex Molecular Systems (ICMS), Eindhoven University of Technology, P.O. Box 513, 5600 MB Eindhoven, the Netherlands}
\affil[*]{Corresponding author; m.a.peletier@tue.nl}

\abstract{Onsager's 1931 `reciprocity relations' result connects microscopic time-reversibility with a symmetry property of corresponding macroscopic evolution equations. Among the many consequences is a variational characterization of the macroscopic evolution equation as a gradient-flow, steepest-ascent, or maximal-entropy-production equation. Onsager's original theorem is limited to close-to-equilibrium situations, with a Gaussian invariant measure and a linear macroscopic evolution. In this paper we generalize this result beyond these limitations, and show how the microscopic time-reversibility leads to natural generalized symmetry conditions, which take the form of generalized gradient flows.  }
\maketitle

\noindent\textbf{Keywords:} gradient flows, generalized gradient flows, large deviations, symmetry, microscopic reversibility

\communicated{...}
\dedication{...}

\received{...}
\accepted{...}

\section{Introduction}

\subsection{Onsager's reciprocity relations}

In his two seminal papers in 1931, Lars Onsager showed how time-reversibility of a system implies certain symmetry properties of macroscopic observables of the system~\cite{Onsager31,OnsagerMachlup1953}. In modern mathematical terms the main result can be expressed as follows. 
\begin{theorem}
\label{th:Onsager}
Let $X_t$ be a Markov process in $\R^n$ with transition kernel $P_t(dx|x_0)$ and invariant measure $\mu(dx)$. Define the expectation $z_t(x_0)$ of $X_t$ given that $X_0= x_0$, 
\[
z_t(x_0) = \Expectation_{x_0} X_t = \int x P_t(dx|x_0).
\]
Assume that
\begin{enumerate}
\item $\mu$ is \emph{reversible}, i.e., for all $x$, $x_0$, and $t>0$, $\mu(dx_0)P_t(dx|x_0) = \mu(dx)P_t(dx_0|x)$;
\item \label{th:Onsager:cond-mu-Gaussian} $\mu$ is Gaussian with mean zero and covariance matrix $G$.
\item \label{th:Onsager:cond-evolution}
$t\mapsto z_t(x_0)$ satisfies the equation 
\begin{equation}
\label{eq:Onsager}
\dot z_t = -A z_t,
\end{equation}
for some nonnegative $n\times n$ matrix $A$. 
\end{enumerate}

\medskip

Then $M:= AG$ is symmetric positive definite, and if we define $\S(x)$ by $\mu(dx) = \exp( \S(x))\, dx$, then 
equation~\eqref{eq:Onsager} can be written as 
\begin{equation}
\label{eq:OnsagerS}
\dot z_t = M D\S(z_t).
\end{equation}
Here we write $D\S$ for the derivative of $\S$.
\end{theorem}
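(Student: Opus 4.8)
The plan is to reduce everything to showing that the matrix $M=AG$ is symmetric (and nonnegative), because the rewriting of \eqref{eq:Onsager} as \eqref{eq:OnsagerS} is then immediate: since $\mu$ is the centered Gaussian with covariance $G$, one has $\S(x)=-\tfrac12 x^\top G^{-1}x+\mathrm{const}$, hence $D\S(x)=-G^{-1}x$, and therefore $M\,D\S(z_t)=AG(-G^{-1}z_t)=-Az_t=\dot z_t$. So the whole content of the theorem is the symmetry and positivity of $M$.

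First I would solve the macroscopic ODE. Since $z_0(x_0)=\Expectation_{x_0}X_0=x_0$ (the transition kernel satisfies $P_0(dx|x_0)=\delta_{x_0}$), the evolution equation \eqref{eq:Onsager} gives $z_t(x_0)=\re^{-tA}x_0$; the decisive point is that this is \emph{linear} in $x_0$. Writing $\pi_i(x)=x_i$ for the coordinate functions (which lie in $L^2(\mu)$ because $\mu$ is Gaussian) and $(P_tf)(x_0)=\int f(x)\,P_t(dx|x_0)$ for the transition semigroup, this reads $(P_t\pi_i)(x_0)=\sum_k(\re^{-tA})_{ik}\,\pi_k(x_0)$.

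Next I would use reversibility. Integrating the detailed-balance identity in assumption~(i) against $f(x)g(x_0)$ and using Fubini shows that $P_t$ is self-adjoint on $L^2(\mu)$: $\langle P_tf,g\rangle_\mu=\langle f,P_tg\rangle_\mu$ for all $f,g\in L^2(\mu)$. Applying this with $f=\pi_i$, $g=\pi_j$ and using the second moments of the centered Gaussian, $\int\pi_k\pi_\ell\,d\mu=G_{k\ell}$, the left-hand side equals $\sum_k(\re^{-tA})_{ik}G_{kj}=(\re^{-tA}G)_{ij}$ while the right-hand side equals $\sum_k(\re^{-tA})_{jk}G_{ik}=(\re^{-tA}G)_{ji}$ (using $G=G^\top$). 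Hence $\re^{-tA}G$ is symmetric for every $t\ge 0$; differentiating at $t=0$ yields that $-AG$, and thus $M=AG$, is symmetric.

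For positivity I would argue that $B:=G^{-1/2}MG^{-1/2}$ is symmetric (since $M$ is) and similar to $A=MG^{-1}=G^{1/2}BG^{-1/2}$, so $A$ has real eigenvalues; as $A$ is nonnegative its symmetric part is positive semidefinite, forcing these real eigenvalues to be $\ge 0$, hence $B\succeq 0$ and therefore $M=G^{1/2}BG^{1/2}\succeq 0$, with strict positive definiteness when $A$ is invertible (e.g. under ergodicity). I expect no real obstacle in the core argument — the one genuinely useful idea is "reversibility $\Leftrightarrow$ self-adjointness of $P_t$, evaluated on linear functions" — and the only points needing care are the measure-theoretic justifications ($\pi_i\in L^2(\mu)$, the Fubini step) and matching the "nonnegative matrix" hypothesis precisely to the claimed positive definiteness.
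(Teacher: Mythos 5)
The paper itself contains no proof of Theorem~\ref{th:Onsager}: it refers the reader to the classical derivation in \cite[Sec.~VII.4]{GrootMazur62}, so your argument has to be judged on its own merits rather than against an in-paper proof. On those merits it is correct, and it is essentially the classical correlation-function argument recast in $L^2(\mu)$ language: reversibility is exactly self-adjointness of the transition semigroup $P_t$ on $L^2(\mu)$, and evaluating $\langle P_t\pi_i,\pi_j\rangle_\mu=\langle\pi_i,P_t\pi_j\rangle_\mu$ on the coordinate functions, together with $z_t(x_0)=\re^{-tA}x_0$, reproduces the stationary-correlation identity $\re^{-tA}G=(\re^{-tA}G)^{\top}$ that Onsager obtains from microscopic reversibility plus the regression of fluctuations; differentiating at $t=0$ gives symmetry of $M=AG$, and the rewriting $\dot z=MD\S(z)$ is immediate from $D\S(x)=-G^{-1}x$, exactly as you say. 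Your caveat on definiteness is also well taken: as literally stated the hypotheses only yield $M\succeq0$ (the degenerate example $A=0$, $X_t\equiv X_0$ satisfies all assumptions), so strict positive definiteness genuinely requires an extra nondegeneracy assumption such as invertibility of $A$. One refinement you may like: nonnegativity can be obtained from reversibility alone, without deciding what ``nonnegative $A$'' means, because for $f(x)=v\cdot x$ one has $v^{\top}(\re^{-tA}G)v=\langle P_tf,f\rangle_\mu=\lVert P_{t/2}f\rVert_{L^2(\mu)}^2$, which is nonincreasing in $t$ since $P_s$ is a self-adjoint $L^2(\mu)$-contraction; its derivative at $t=0$, namely $-v^{\top}AGv$, is therefore $\le0$, i.e.\ $M\succeq0$. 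The only points to spell out in a polished write-up are the ones you already flagged (integrability of the coordinates and the Fubini step), which are harmless here because $\mu$ is Gaussian.
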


\bigskip

Note that we purposefully avoid the use of thermodynamic terms such as temperature, pressure, or energy, in order to highlight that the result is essentially \emph{mathematical} in nature (which becomes even more evident upon considering the proof as described in e.g.~\cite[Sec.~VII.4]{GrootMazur62}). In the context of thermodynamics, the `Markov process' typically is some stochastic process given by the macroscopic observables in question. 

\bigskip

The symmetry property $M^T=M$ states equality of corresponding off-diagonal elements of $M$, which by equation~\eqref{eq:OnsagerS} translates into equality of `cross-effect coefficients'. These are known as the `reciprocity relations', and are a corner stone of linear irreversible thermodynamics. In addition to the practical convenience of reducing the number of parameters, they also present a rare simplifying insight into the wild world of irreversible processes.

In its original form, however, Theorem~\ref{th:Onsager} only applies to near-Gaussian fluctuations around a given stationary state, which translates into `close to equilibrium'. For this reason, many generalizations have been proposed, e.g.~\cite{Casimir45,Ziegler58,Gyarmati70,HurleyGarrod82,Garcia-ColinRio-Correa84,GallavottiCohen95,Gallavotti96,MaesNetocny07,Seifert12,ReinaZimmer15TR}.

Despite these advances, there is still a major challenge. In Onsager's argument the macroscopic symmetry property arises from the microscopic reversibility, and it is exactly this connection that gives Onsager symmetry its force. This connection is generalized in our paper~\cite{MielkePeletierRenger14}; in the current paper we present this mathematical micro-to-macro connection from a more thermodynamic perspective. 

The central observation that allows us to move forward is that the property `symmetry of a matrix' is intrinsically linked to the linear nature of a matrix---the fact that a matrix is a linear operator. This linear nature is an obstacle to generalization to nonlinear behaviour. We therefore generalize Onsager's relations in three steps:
\begin{enumerate}
\item[I.] We first remark that symmetry and positive definiteness of the mobility $M$ {are} equivalent to the property that equation~\eqref{eq:OnsagerS} is a \emph{gradient flow}; 
\item[II.] We next generalize the concept of gradient flows (with linear mobilities) to \emph{generalized gradient flows}, in which the mobility may be a  nonlinear map;
\item[III.] We finally show that under more general conditions than Onsager's, the macroscopic evolution is such a generalized gradient flow.
\end{enumerate}
This final point is the generalization of Onsager's relations: instead of guaranteeing symmetry of some linear operator, which is equivalent to the macroscopic equation being a gradient flow, we guarantee under more general conditions that the macroscopic equation is a \emph{generalized} gradient flow. We argue in this paper that this generalized gradient-flow property is the natural generalization of Onsager symmetry, from the point of view of microscopic reversibility.

\newarrow{Dto}<--->
\noindent
\hskip-3cm
\begin{diagram}[nohug,balance,midshaft,heads=LaTeX,width=2cm,height=1.5cm,shortfall=3mm]
 \substack{\text{\normalsize Onsager's reciprocity relations}\\[\jot]\text{\footnotesize mobility is symmetric pos. def.}}&\rDto^{\text{\small equivalence}}& 
 \substack{\text{\normalsize Gradient flows}\\[2\jot] \text{\footnotesize (linear) mobility generates GF}}\\
\dTo&& \dTo_{\text{\small \ this paper}}\\           
\hbox{\large ?}& &        \substack{\text{\normalsize Generalized gradient flows}\\[\jot]\text{\footnotesize (nonlinear) mobility \emph{function} generates GGF}} \\
\end{diagram}

We walk through these three steps one by one in the sections that follow.

\bigskip

\begin{remark}
One should pay attention to the fact that in probability theory reversibility means detailed balance, whereas in thermodynamics, reversibility means that quantities such as entropy or  free energy remain constant along evolutions. In fact, probabilistic \emph{reversibility} often leads to thermodynamic \emph{irreversibility}, as is the case here.
\end{remark}

\begin{remark}
The results that we discuss in this paper are described in mathematical terms and in full detail in~\cite{MielkePeletierRenger14}. In this paper we focus on the consequences of these results for thermodynamics.
\end{remark}

\section{Step I: Definiteness is equivalent to gradient structure}

We use the language of geometry, in which the \emph{states}~$z$ are elements of a manifold $\Z$, and at each $z\in \Z$ there is a \emph{tangent plane} $T_z\Z$ consisting of all vectors tangent (thermodynamic fluxes) to $\Z$ at $z$. If $t\mapsto z_t$ is a smooth curve along the manifold $\Z$, then the time derivative $\dot z_t$ is a tangent vector, i.e. $\dot z_t\in T_{z_t}\Z$. 
Dual to the tangent plane $T_z\Z$ at any $z\in \Z$ is the \emph{cotangent plane} $T^*_z\Z$, whose elements (driving forces) are linear functionals on tangents, i.e. each $\xi\in T_z^*\Z$ is a linear mapping from  $T_z\Z$ to $\R$. We write  $\langle s,\xi \rangle$ for the dual pairing between $\xi\in T_z^*\Z$ and $s\in T_z\Z$, and we write $T\Z$ and $T^*\Z$ for the collection of all tangent and cotangent planes. 


\begin{definition}
We call the triple $(\Z,\S,M)$ a \emph{gradient system} if $\S:\Z\to\R$ and for each $z\in \Z$,
$M(z)$ is a symmetric positive definite linear mapping from $T_z^*\Z$ to $T_z\Z$.

A curve $z:\lbrack0,T\rbrack\to\Z$ is then called a \emph{solution of the gradient flow} if it solves the differential equation
\begin{equation}
\label{eq:GF}
\dot z_t = M\bigl(z_t\bigr) D\S\bigl(z_t\bigr)
\qquad\text{or equivalently} \qquad
D\S\bigl(z_t\bigr) = M(z_t)^{-1} \dot z_t.
\end{equation}
\label{def:gradient system}
\end{definition}

Although the equation~\eqref{eq:GF} is also meaningful for more general linear mappings $M$, we purposefully require symmetry and positive definiteness in order to call~\eqref{eq:GF} a gradient flow, for a number of reasons. First, the positive-definiteness forces $\S$, which in many cases one can interpret as an entropy or a negative free energy, to \emph{increase} along solutions, since 
\begin{equation}
  \frac d{dt} \S\bigl(z_t\bigr) = \langle D\S\bigl(z_t\bigr), \dot z_t\rangle = \langle D\S\bigl(z_t\bigr),M\bigl(z_t\bigr) D\S\bigl(z_t\bigr)\rangle \geq 0.
\label{eq:linear gradient flow Lyapunov}
\end{equation}
Secondly, positive-definiteness implies invertibility of $M(z)$ (there are generalizations possible where $M$ is only semidefinite). The third reason lies in the variational characterization of $\dot z$ that we discuss below. If $M(z)$ and $M(z)^{-1}$ are symmetric and positive definite, then they define norms $\frac12\xi M(z)\xi$ and $\frac12 s^T M(z)^{-1}s$ on cotangents $\xi\in T_z^*\Z$ and tangents $s\in T_z\Z$ respectively. 

Using the norm on the tangent space, the evolution can also be given a variational characterization, already recognized by Onsager, as
\begin{multline*}
\text{a curve $t\mapsto z_t$ is a solution of the gradient flow if and only for each $t$ and $z_t$,}\\
\text{$\dot z_t$ solves the maximization problem}\ 
\max_{s\in T_z\Z}\  D\S\bigl(z_t\bigr)^T s - \frac12 s^T M(z)^{-1}s.
\end{multline*}
Since the first term in the minimised expression above can be interpreted as the instantaneous production of $\S$, this evolution is often interpreted as `steepest ascent' of $\S$ (e.g.~\cite{Beretta87,MartyushevSeleznev06,Beretta14}). 
This variational characterization depends critically on the symmetry of $M$, and this is why the name `gradient system' is reserved for symmetric operators $M$.

\medskip

Note that in such a \emph{gradient flow} the mobility operator $M = M(z)$ may depend in any way on the position $z$, but the dependence on the argument $\xi = D\S(z)$ is linear: $\xi \mapsto M(z)\xi$ is a linear map. This is different from the case of nonlinear mobilities $M$, which are often written as $\xi \mapsto M(z,\xi)\xi$, in which this map depends nonlinearly on $\xi$. The \emph{generalized} gradient-flow concept that we now introduce allows for nonlinear dependence on $\xi$.


\section{Step II: Generalized gradient systems}
\label{subsec:generalized gradient flows}

We generalize to systems with nonlinear mobility $M$ by noting that a gradient flow can alternatively be formulated by defining, for $z\in\Z, s\in T_z\Z$ and $\xi\in T^*_z\Z$,
\begin{align}
\label{eq:quadratic Psi Psi*}
  \Psi(z,s):=\frac12\langle M(z)^{-1}s,s\rangle &&\text{and its convex dual}&&   \Psi^*(z,\xi):=\frac12\langle \xi,M(z)\xi\rangle.
\end{align}
Here the convex duality is in the sense of Legendre-Fenchel transforms, 
\begin{align}
  \Psi^*(z,\xi)=\sup_{s\in T_z\Z} \langle \xi,s\rangle - \Psi(z,s), &&\text{and} &&\Psi(z,s)=\sup_{\xi\in T_z^*\Z} \langle \xi,s\rangle - \Psi^*(z,\xi).
\label{eq:Psi-Psi* dual pair}
\end{align}
Equation \eqref{eq:GF} can then also be written as
\begin{equation}
\label{eq:Psi* gradient flow}
  D_s\Psi(z_t,\dot z_t) = D\S(z_t) \qquad\text{or equivalently}\qquad  \dot z_t = D_\xi\Psi^*\big(z_t,D\S(z_t)\big),
\end{equation}
which, by Legendre-Fenchel theory, is equivalent to requiring
\begin{equation}
\label{eq:Psi Psi* formulation}
  \Psi(z_t,\dot z_t) + \Psi^*\big(z_t,D\S(z_t)\big) - \langle D\S(z_t),\dot z_t\rangle=0.
\end{equation}

\emph{Generalized} gradient systems are again of the form~\eqref{eq:Psi* gradient flow} and~\eqref{eq:Psi Psi* formulation}, but where $\Psi$ and $\Psi^*$ need not be of the quadratic form~\eqref{eq:quadratic Psi Psi*}. Such formulations have a long history of study in both the thermodynamic~\cite{Edelen72,Grmela93,GrmelaOttinger97,Grmela10,Grmela12} and the mathematical literature~\cite{DeGiorgiMarinoTosques80,De-Giorgi93,ColliVisintin90,LuckhausSturzenhecker95,MielkeRossiSavare08,Mielke11a,Mielke11}, under various different names, such as Dissipation potentials, Entropy-dissipation structures, $\Psi$-$\Psi^*$-structures, and many others.
Here, $\Psi$ and $\Psi^*$ need to satisfy a number of conditions in order to yield a meaningful generalized formulation; these are summarized in the following definition.
\begin{definition}
We call the triple $(\Z,\S,\Psi)$ a \emph{generalized gradient system} if $\S:\Z\to\R$, $\Psi:T\Z\to\R$,  and for all $z\in\Z$:
\begin{enumerate}
\item $\Psi(z,\cdot)$ is convex in the second argument,
\item\label{def:GGSitem2} $\min\Psi(z,\cdot)=0$, and
\item\label{def:GGSitem3} $\Psi(z,0)=0$.
\end{enumerate}
In addition, $\Psi$ is called \emph{symmetric} if $\Psi(z,s)=\Psi(z,-s)$ for all $(z,s)\in T\Z$.

A curve $z:\lbrack0,T\rbrack\to\Z$ satisfying \eqref{eq:Psi* gradient flow} or~\eqref{eq:Psi Psi* formulation} is then called a solution of the \emph{generalized gradient flow} induced by the generalized gradient system $(\Z,\S,\Psi)$. 
\label{def:gen gradient system}
\end{definition}

With the duality~\eqref{eq:Psi-Psi* dual pair} we have the following equivalences for Conditions~\ref{def:GGSitem2}  and \ref{def:GGSitem3},
\begin{subequations}
\begin{align}
  &\min\Psi(z,\cdot)=0                 && \iff && \Psi^*(z,0)=0,            \label{eq:min Psi iff Psi*}\\
  &\Psi(z,0)=0                         && \iff && \min\Psi^*(z,\cdot)=0     \label{eq:Psi iff min Psi*},
\end{align}
\end{subequations}
so that both $\Psi\geq0$ and $\Psi^*\geq0$. This is a central part of the definition; it implies that for curves satisfying~\eqref{eq:Psi Psi* formulation}, we have the nonlinear analogue of \eqref{eq:linear gradient flow Lyapunov}:
\[
\frac{d}{dt}\S(z_t)=\langle D\S(z_t),\dot z_t\rangle =   \Psi(z_t,\dot z_t) + \Psi^*(z_t,D\S(z_t))\geq0.
\]

\begin{remark}
In a generalized gradient flow the role of the mobility is played by the (potentially nonlinear) map $\xi \mapsto D_\xi \Psi^*(z,\xi)$. Although this allows for nonlinear mobilities, the requirement that $D_\xi\Psi^*$ is a derivative means that not all nonlinear mobilities $M(z,\xi)\xi$ can be written as $M(z,\xi)\xi = D_\xi \Psi^*(z,\xi)$ for some $\Psi^*$; see~\cite{HutterSvendsen13} for a discussion.
\end{remark}





\section{Step III: Reversibility and large deviations lead to generalized gradient structures}

We now come to our main point. We generalize Onsager's result by connecting reversibility with \emph{generalized} gradient flows, but in a more general context than that of Theorem~\ref{th:Onsager}. Instead of a fixed system we choose a system with a parameter $n$, and consider the limit $n\to\infty$. This allows us to treat a very general class of systems using the theory of large deviations. 


Let $Z^{n}_t$ be a sequence of stochastic processes with transition kernel $P_t^{n}$ and invariant measure~$\mu^n$. Typically, $Z^n_t$ models a system with microscopic fluctuations, where $n$ is a large quantity like the number of particles in the system. Assume that 
\begin{enumerate}[label=(\Alph*)]
\item \label{LDP-cond1}
$P^n_t$ and $\mu^n$ satisfy the same reversibility condition as in Theorem~\ref{th:Onsager}, i.e. $\mu^n(dx_0)P^n_t(dx|x_0) = \mu^n(dx)P^n_t(dx_0|x)$; 
\item \label{LDP-cond2}
The equilibrium distributions $\mu^n$ satisfy a large-deviation principle with rate function~$\S$, i.e. if $Y^n$ are a random variables with distribution $\mu^n$, then
\[
  \Prob(Y^n\approx y) \mathop{\sim}_{n\to\infty} \exp\bigl[n \S(y)\bigr];
\]
\item \label{LDP-cond3}
For all $T>0$, $Z^{n}$ satisfies a large-deviation principle of the form
  \begin{align}
  \label{eq:path ldp general process}
    &\Prob_{\mu^n}\!\left(Z^{n}_t\big|_{t\in[0,T]} \approx z\big|_{t\in[0,T]}\right) \mathop{\sim}_{n\to\infty} \exp\bigl[n \left(\S(z_0)-I_T(z)\right)\bigr],
    &I_T(z):=\int_0^T\!\cL(z_t,\dot z_t)\,dt,
  \end{align}
  where the subscript $\mu^n$ indicates that the process starts at a position drawn from $\mu^n$.
\end{enumerate}

These three assumptions are natural generalizations of the conditions of Onsager's Theorem~\ref{th:Onsager}.
Condition~\ref{LDP-cond2} above is a natural generalization of the requirement that $\mu$ is Gaussian (Theorem~\ref{th:Onsager}, condition~\ref{th:Onsager:cond-mu-Gaussian}). Note that since probabilities are bounded by one, Condition~\ref{LDP-cond2} implies that $\S\leq0$, which can be seen as the large-deviation counterpart of the normalization of probability measures; the most probable behaviour in equilibrium is given by the maximizer(s) of $\S$ with value zero.

Condition~\ref{LDP-cond3} above is the natural extension of the assumption on the linear evolution of macroscopic quantities in Theorem~\ref{th:Onsager}, condition~\ref{th:Onsager:cond-evolution}, but this requires a little more explanation. The typical case is that $\cL$ is what we call an \emph{L-function}:
\begin{definition}We call~$\cL$ an L-function\footnote{The letter `L' in `L-function' refers to the word \emph{Lagrangian}, a name often used for expressions of the form of $I_T$ in~\eqref{eq:path ldp general process}. We thank the anonymous reviewer for pointing out that the name `L-function' has also been used to indicate \emph{Lyapunov} functions~(see e.g.~\cite{Beretta86}), which is different concept (but note that Theorem~\ref{th:main} implies that an L-function $\cL$ with certain properties \emph{generates} a Lyapunov function $\S$).}
 if 
\begin{enumerate}
  \item $\cL\geq0$,
  \item $\cL(z,\,\cdot\,)$ is convex for all $z$,
  \item $\cL$ induces an evolution equation $\dot z_t=\cA_\cL(z_t)$, in the sense that for all $(z,s)$,
\begin{equation}
\label{eq:L-function minimization}
  \cL(z,s)=0  \qquad \iff \qquad s = \cA_\cL(z).
\end{equation}
\end{enumerate}
\label{def:L-function}
\end{definition}
Again, the fact that $\min_s \cL(z,s)=0$ follows from the boundedness of probabilities.  The convexity is a standard consequence of the lower-semicontinuity given by large-deviation theory, but it will be important below.
The fact that $\cL(z,\cdot)$ has a single minimum, however, is a strong requirement, which is satisfied if in the limit $n\to\infty$ the behaviour of $Z^n_t$ becomes deterministic (a form of the law of large numbers). The limiting macroscopic equation is then $\dot z_t = \cA_\cL(z_t)$.

A very large class of systems satisfies the assumptions above: diffusion-type equations arising from stochastic lattice models, for instance~\cite{KipnisOllaVaradhan89}, or from many-particle Brownian motion~\cite{KipnisOlla90}, including convection and mean-field interaction~\cite{DawsonGartner87,FengKurtz06}. Chemical reactions are another important example, and we describe this in more detail in the next section. The method of Feng and Kurtz~\cite[Sec.~8.6.1.2]{FengKurtz06} provides an algorithm to derive such large-deviation principles in great generality.


The following theorem is our main result.
\begin{theorem}
\label{th:main}
Assume conditions \ref{LDP-cond1}--\ref{LDP-cond3} above. If $\cL$ is an L-function, then it generates a generalized gradient system $(\Z,\tfrac12\S,\Psi)$ with symmetric potential $\Psi$, i.e. the macroscopic evolution $\dot z_t = \cA_\cL(z_t)$ (see~\eqref{eq:L-function minimization}) can be written as the generalized gradient flow 
$\dot z_t = D\Psi^*\bigl(z_t,\tfrac12 D\S(z_t)\bigr)$. 
\end{theorem}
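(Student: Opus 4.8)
The plan is to distil the entire generalized gradient structure out of a single symmetry identity for the L-function $\cL$, obtained by pushing the time-reversal invariance of the reversible stationary process through the path-space large-deviation principle~\ref{LDP-cond3}. Once that identity is available, the potential $\Psi$ can be written down explicitly and the rest is Legendre--Fenchel bookkeeping; the ubiquitous factor $\tfrac12$ appears as a by-product of this computation.

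\textbf{Step 1 (a time-reversal identity for $\cL$).} Fix $T>0$ and a curve $z:[0,T]\to\Z$, and write $\hat z_t:=z_{T-t}$ for its time reversal. By reversibility~\ref{LDP-cond1} the stationary process $(Z^n_t)_{t\in[0,T]}$ started from $\mu^n$ has the same law as $(Z^n_{T-t})_{t\in[0,T]}$, and since the event $\{(Z^n_{T-t})_t\approx z\}$ coincides with $\{(Z^n_t)_t\approx\hat z\}$, this yields
\[
\Prob_{\mu^n}\!\bigl(Z^n|_{[0,T]}\approx z\bigr)=\Prob_{\mu^n}\!\bigl(Z^n|_{[0,T]}\approx\hat z\bigr).
\]
Taking $\tfrac1n\log(\cdot)$ and invoking~\ref{LDP-cond3}, using $\hat z_0=z_T$ and the substitution $t\mapsto T-t$ in $I_T(\hat z)$, I would get
\[
\S(z_0)-\int_0^T\cL(z_t,\dot z_t)\,dt=\S(z_T)-\int_0^T\cL(z_t,-\dot z_t)\,dt .
\]
Inserting $\S(z_T)-\S(z_0)=\int_0^T\langle D\S(z_t),\dot z_t\rangle\,dt$ and localizing the resulting integral identity (e.g.\ differentiating at $t=0$ along a curve through $z$ with velocity $s$) gives the pointwise relation
\[
\cL(z,-s)-\cL(z,s)=\langle D\S(z),s\rangle\qquad\text{for all }z\in\Z,\ s\in T_z\Z .\qquad(\star)
\]

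\textbf{Step 2 (reading off the generalized gradient system).} Guided by~($\star$), I would set
\[
\Psi(z,s):=\cL(z,s)+\tfrac12\langle D\S(z),s\rangle-\cL(z,0)
\]
(finite under the mild implicit assumption $\cL(z,0)<\infty$). Then $\Psi(z,\cdot)$ is convex, since it differs from the convex function $\cL(z,\cdot)$ by an affine term, and relation~($\star$) makes it symmetric, $\Psi(z,-s)=\Psi(z,s)$. A convex symmetric function has its minimum at $0$, so $\min\Psi(z,\cdot)=\Psi(z,0)=0$, and hence $(\Z,\tfrac12\S,\Psi)$ satisfies Definition~\ref{def:gen gradient system} with symmetric $\Psi$. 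Evaluating the dual at the force $\xi=\tfrac12 D\S(z)$ and using $\inf_s\cL(z,s)=\cL(z,\cA_\cL(z))=0$ (the L-function property) gives $\Psi^*\!\bigl(z,\tfrac12 D\S(z)\bigr)=\cL(z,0)$, so that
\[
\cL(z,s)=\Psi(z,s)-\langle\tfrac12 D\S(z),s\rangle+\Psi^*\!\bigl(z,\tfrac12 D\S(z)\bigr)\ \ge\ 0 .
\]
By the equality case of the Fenchel--Young inequality, $\cL(z,s)=0$ iff $s=D\Psi^*\bigl(z,\tfrac12 D\S(z)\bigr)$, and uniqueness of the zero of $\cL(z,\cdot)$ makes the right-hand side single-valued; therefore $\cA_\cL(z)=D\Psi^*\bigl(z,\tfrac12 D\S(z)\bigr)$, which is exactly the claimed generalized gradient flow $\dot z_t=D\Psi^*\bigl(z_t,\tfrac12 D\S(z_t)\bigr)$.

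\textbf{The main obstacle.} Step 2 is essentially routine once~($\star$) is in hand. The substantive part is Step 1: making precise the informal $\approx$ and $\sim$ of~\ref{LDP-cond2}--\ref{LDP-cond3}, verifying that the large-deviation rate functionals of the forward and the time-reversed processes genuinely coincide (this is the rigorous content of reversibility at the level of rate functionals, and is where care is needed), and justifying the localization that converts the integrated identity into the pointwise relation~($\star$), which requires enough regularity of $\S$ and $\cL$. A minor secondary point is the differentiability of $\Psi^*$ needed to state the evolution as an equation rather than a differential inclusion --- but here this comes for free from the defining property that $\cL(z,\cdot)$ has a \emph{unique} zero.
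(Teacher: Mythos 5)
Your proposal is correct and follows the same two-step route as the paper: your Step 1 is exactly the paper's derivation of the symmetry relation~\eqref{eq:symmetry-cL}, and your Step 2 --- the explicit choice $\Psi(z,s)=\cL(z,s)+\tfrac12\langle D\S(z),s\rangle-\cL(z,0)$, the symmetric-convex-minimum argument, and the Fenchel--Young equality case combined with uniqueness of the zero of $\cL(z,\cdot)$ --- supplies precisely the ``manipulation of Legendre transforms'' that the paper delegates to~\cite[Th.~2.1]{MielkePeletierRenger14}, and your $\Psi^*(z,\xi)=\cL(z,0)+\cL^*(z,\xi-\tfrac12 D\S(z))$ coincides with the construction there. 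In particular your decomposition reproduces~\eqref{eq:LPsi-connection} verbatim, so nothing essential is missing.
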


\begin{proof}
The proof consists of two steps:
\begin{enumerate}
\item \label{part1proof:rev}
The reversibility implies a symmetry relation for $\cL$ and $\S$,
\begin{equation}
\label{eq:symmetry-cL}
  \cL(z,s) - \cL(z,-s) = -\langle D\S(z),s\rangle.
\end{equation}
\item This symmetry implies that there exists a pair of potentials $\Psi,\Psi^*$, satisfying the requirements of Definition~\ref{def:gen gradient system}, such that 
\begin{equation}
\label{eq:LPsi-connection}
\cL(z,s) = \Psi(z,s) + \Psi^*\bigl(z,\tfrac12 D\S(z)\bigr) - \langle \tfrac12 D\S(z),s\rangle.
\end{equation}
\end{enumerate}

Part~\ref{part1proof:rev} is a well-known consequence of reversibility in the context of large deviations~\cite{Maes99,MaesRedigVan-Moffaert00,Seifert12}, and its proof runs as follows. For an arbitrary $(z,s)\in T\Z$, take any curve for which $(z_0,\dot z_0)=(z,s)$. From the reversibility follows the exact characterization, for all $T>0$,
\[
\Prob_{\mu^n}\!\left(Z^{n}_t\big|_{t\in[0,T]} \approx z\big|_{t\in[0,T]}\right)
= \Prob_{\mu^n}\!\left(Z^{n}_t\big|_{t\in[0,T]} \approx \sigma^{}_T z\big|_{t\in[0,T]}\right)
\]
where $\sigma^{}_T$ is the time-reversal operator $(\sigma^{}_T z)_t = z^{}_{T-t}$. By~\eqref{eq:path ldp general process} this implies
\begin{align*}
\S(z_0) - \int_0^T \cL(z_t,\dot z_t)\, dt 
  &= \S(z^{}_T) - \int_0^T \cL(z^{}_{T-t},-\dot z^{}_{T-t})\, dt \\
&= \S(z^{}_T) - \int_0^T \cL(z_s,-\dot z_s)\, ds .
\end{align*}
Upon dividing by $T$ and taking the limit $T\to0$ we find~\eqref{eq:symmetry-cL}.

The proof of part 2 follows from a manipulation of Legendre transforms; we refer to~\cite[Th.~2.1]{MielkePeletierRenger14} for the details.
\end{proof}

\section{Example: chemical reactions}

A large number of (generalized) gradient-flow formulations of macroscopic evolution equations can be connected to  large-deviation principles of their underlying stochastic processes, in the way that we describe above~\cite{AdamsDirrPeletierZimmer11,DuongLaschosRenger13,AdamsDirrPeletierZimmer13,DuongPeletierZimmer13,BonaschiPeletier14TR,MielkePeletierRenger14,LieroMielkePeletierRenger}. Here we discuss a single example, that of chemical reactions.

We consider chemical reactions of $I$ species $X_1,\dots,X_I$ of the form
\[
\alpha_{1j}X_1 +\dots+\alpha_{Ij} X_I \overset{r_j^+}{\underset{r_j^-}\leftrightharpoons }
\beta_{1j}X_1 +\dots+\beta_{Ij} X_I, \qquad j=1,\dots, J,
\]
under constant temperature and pressure.

Two natural models of this system are the  Chemical Master Equation (e.g.~\cite{Gillespie00}) and the  Reaction-Rate Equation (e.g.~\cite{ErdiToth89}). The Chemical Master Equation (CME) describes a continuous-time, discrete-space stochastic process in terms of numbers of particles of each type; the Reaction-Rate Equation (RRE) is a deterministic equation for concentrations of species:
\begin{equation}
  \dot {\v c}_t = \sum_{j=1}^J \big(r_j^+(\v c_t) - r_j^-(\v c_t)\big)(\v\beta_j-\v\alpha_j).
\label{eq:RRE}
\end{equation}
Under reasonable conditions on the kinetics, in the limit of large volume and under well-mixed conditions, the CME becomes deterministic, and its limit is given by the RRE~\cite{Kurtz73}.


If the system satisfies both detailed balance and mass action, then~\eqref{eq:RRE} admits a stationary set of concentrations $\overline {\v c} = (\overline c_1,\dots,\overline c_I)$ and the net rate of reaction $j$ can be written as 
\[
r_j(\v c) := r_j^+(\v c)-r_j^-(\v c) = k_j  \bigl[{\v u}^{\v \alpha{j}} - 
   {\v u}^{\v \beta{j}} \bigr]
  \qquad \text{where}\qquad 
  u_i := \frac{c_i}{\overline c_i},
\]
and we write
\[
{\v u}^{\v \alpha{j}} := u_1^{\alpha_{1j}}\dots u_I^{\alpha_{Ij}}
\qquad\text{and}\quad 
{\v u}^{\v \beta{j}} := u_1^{\beta{1j}}\dots u_I^{\beta{Ij}}.
\]
Note that typically the evolution takes place in a submanifold, generated by the particular values of the quantities that the evolution preserves. Therefore $\overline c$ is not the only stationary state---each stoichiometic subspace, defined by a particular value of all conserved quantities, contains exactly one stationary state.


The generalized gradient structure $(\Z,\tfrac12\S,\Psi)$ for \eqref{eq:RRE} is given by the driving functional
\begin{equation}
\S(\v c) := -\sum_{i=1}^I \eta\Bigl(\frac{c_i}{\overline c_i}\Bigr) \overline c_i
  = -\sum_{i=1}^I \eta(u_i) \overline c_i, 
  \qquad \eta(s) := s \ln s - s + 1,
\label{eq:chemical free energy}
\end{equation}
and the exponential dissipation potential
\begin{equation}
\label{def:Psi-chemical-reactions}
\Psi_j^*(\v c,\v \xi) = 2k_j {\v u}^{\frac12\v\alpha_j+\frac12\v\beta_j}\big( \cosh\bigl((\v \alpha_j - \v \beta_j)\cdot \v \xi\bigr) -1 \big),
\qquad \Psi^*(\v c, \v \xi) = \sum_{j=1}^J \Psi_j^*(\v c,\v \xi).
\end{equation}
The functional $\S$ can be interpreted as a dimensionless negative free energy, for instance by noting that the derivative of $\S(\v c)$ with respect to a concentration $c_i$ equals
\[
\partial_{c_i} \S(\v c)  = -\eta'\Bigl(\frac{c_i}{\overline c_i}\Bigr) = -\log c_i + \log \overline c_i,
\]
which is equal, up to constant factors and a minus sign, to the common expression for the chemical potential $\mu_i = \mu_i^{\mathrm{std}} + RT\log c_i$.
Nonquadratic potentials such as~\eqref{def:Psi-chemical-reactions} have been discussed by Grmela~\cite{Grmela12} and similar expressions appear in the earlier literature~\cite{Feinberg72}.

The restriction to the submanifold mentioned above is implicitly implemented by $\Psi^*$: since $D\Psi^*_j$ is parallel to $\v \alpha_j - \v \beta_j$, the range of $D\Psi^*$ is the span of the vectors $\{\v \alpha_j - \v \beta_j\}_{j=1}^J$. 

For monomolecular reactions we prove in~\cite{MielkePeletierRenger14} that this generalized gradient system is indeed generated by the large deviations of the corresponding Chemical Master Equation, which is the discrete stochastic process in terms of numbers of particles of each species. In the forthcoming paper~\cite{MielkePattersonPeletierRenger15TA} we show that the same is true for general mass-action systems with detailed balance.  This shows that this generalized-gradient-flow structure of chemical reactions has its origins in the large-deviations behaviour of the underlying stochastic systems. The same is true for various other processes, such as diffusion~\cite{AdamsDirrPeletierZimmer13}, heat conduction~\cite{PeletierRedigVafayi14}, passive transport across membranes~\cite{GlitzkyMielke13}, and thermally activated motion~\cite{BonaschiPeletier14TR}, and we conjecture that this is actually a very general phenomenon.


\section{Discussion}

The strength of Onsager's theorem lies in the combination of its broad scope (all processes in a certain class) with a rigorous statement (the macroscopic equations have a certain structure). We generalize this theorem by enlargening the class to all sequences of processes with a large-deviation principle; our assertion, also rigorous, is an appropriate generalization of Onsager's symmetry statement, which coincides with the original symmetry in the case of linear processes.

Incidentally, this work shows how large-deviation theory, originally developed to  understand mathematically such physical concepts as free energy and entropy, provides us with the tools also to  understand dynamical generalizations of these concepts. In this way it appears to be the natural link between the stochastic microscopic description and the deterministic macroscopic one. 

Interestingly, the large-deviation viewpoint of this paper explains the appearance of nonquadratic dissipation potentials in thermodynamics: when the underlying stochastic process has non-Gaussian fluctuations, then the large-deviation rate functional has non-quadratic dependence on the generalized velocity, and the dissipation potential in the generalized gradient flow is non-quadratic. The non-quadratic nature therefore can be traced back to non-Gaussian fluctuations.

It would be very natural to consider systems with varying time-reversal parity, following Casimir~\cite{Casimir45}, but at this moment this is only understood for a single example~\cite{DuongPeletierZimmer13}.

\medskip

\textbf{Acknowledgement. } The authors are grateful to Gian Paolo Beretta, Miroslav Grmela, Markus H\"utter, and Hans-Christian \"Ottinger for their comments on an early version of this paper. AM and DRMR were partially supported by DFG via SFB 1114 (projects C5 and C8); MAP was supported by the NWO VICI grant 639.033.008.

%

\bibliographystyle{alphainitials}
\bibliography{IWNET-MielkePeletierRenger}
\end{document}